\documentclass[11pt,a4paper]{article}
\usepackage[T1]{fontenc}
\usepackage[utf8]{inputenc}
\usepackage{fullpage}
\usepackage{lmodern}
\usepackage{amsmath,amssymb,amsthm}
\usepackage{microtype}
\usepackage[hidelinks]{hyperref}
\usepackage{bookmark}
\usepackage{enumitem}
\usepackage{xcolor}
\usepackage{graphicx}
\newtheorem{lemma}{Lemma}[section]
\newtheorem{theorem}[lemma]{Theorem}
\newtheorem{corollary}[lemma]{Corollary}
\numberwithin{equation}{section}

\begin{document}

\title{Logarithmic Separation between LZ77 and LZ-End and Grammar Compression}
\author{Philip Bille \and Inge Li Gørtz \and Rikke Jessen \and Jingxun Li \and Simon Rumle Tarnow \and Renfei Zhou}
\date{}
\maketitle

\section*{Introduction}
The Lempel-Ziv factorization of a text~\cite{lz77} (LZ77) is a greedy left-to-right parse in maximal substrings such that each substring is either a single (new) character or already occurred to the left. Despite its simplicity, LZ77 can be easily shown to be optimal among all unidirectional parses (i.e. that copy phrases from left-to-right), and dominates other popular compression schemes such as Straight Line Programs (SLPs), i.e., context-free grammars that generate only the text as output.

LZ-End~\cite{KN10,KN13} is a variant of LZ77 parsing with the added constraint that the source of the phrase must end at a phrase boundary of a previous phrase. Let $z(S)$ be the number of phrases in the LZ77 parsing of a string $S$, and let  $z_\mathrm{no}(S)$ be the number of phrases when overlaps are not allowed between phrases and their sources. Lastly, let $z_{\mathrm{end}}(S)$ be the minimum number of phrases in any legal LZ-End parsing of $S$.
While there is no known data structure of $O(z(S))$ size that provides eﬃcient random access to arbitrary positions in the string, it was recently shown that  $\tilde{O}(1)$ time access could be achieved with $O(z_{\mathrm{end}}(S))$ space~\cite{BonehGawrychowski2026}.
Note that, every LZ-End parsing is also a legal unidirectional parse, so
$z_{\mathrm{no}}(S)\le z_{\mathrm{end}}(S)$.

In this paper we answer the following question: 
How much can the source-boundary restriction increase the number of phrases?

\paragraph{Our results.}
We give an explicit, deterministic family of binary strings $T_k$, indexed
by $k$, where $k$ is a power of two, such that
\[
 |T_k|=\Theta(k^2\log k),\qquad
 z_{\mathrm{no}}(T_k)=O(k),\qquad
 z_{\mathrm{end}}(T_k)=\Omega(k\log k).
\]
Writing $n=|T_k|$, we obtain
\[
 \frac{z_{\mathrm{end}}(T_k)}{z_{\mathrm{no}}(T_k)}
 =\Omega(\log n).
\]
The lower bound holds for every legal LZ-End parsing, with unrestricted choices of source occurrences and phrase boundaries. 

The result has a consequence for grammar compression. Let $g(S)$ denote the
minimum total length of the right-hand sides of an acyclic grammar generating
only $S$. The grammar-to-LZ77 conversion of Charikar et al.~\cite{Charikar2005}
also produces a legal LZ-End parsing: expand a variable at its first
occurrence and copy each subsequent occurrence from that first expansion,
which ends at a phrase boundary. Thus $z_{\mathrm{end}}(S)\le g(S)$, and our
strings also satisfy
\[
 \frac{g(T_k)}{z_{\mathrm{no}}(T_k)}=\Omega(\log n).
\]
The same asymptotic separation holds when grammar size is measured by the
number of variables in a binary straight-line program (SLP). We give the
conversion argument explicitly in Section~\ref{sec:grammar-consequence}.
Together with the standard LZ77-to-grammar upper bound
$g(S)=O(z_{\mathrm{no}}(S)\log n)$, these inequalities show that the
worst-case ratios to non-overlapping LZ77 are $\Theta(\log n)$ for both
optimal LZ-End and smallest grammars.

\paragraph{The Snapshot String Construction.}
We first work over a growing alphabet. Start with a string $A_0$ of $k$ distinct symbols. Change one position at a time to a fresh symbol to get $A_1$, choosing the changing positions in \emph{bit reversal order}~\cite{PD2006,Wilber1989,HF1998,FS2026}, and concatenate $A_0 \ldots A_{k}$ with fresh separators between them. Each version differs from its predecessor at exactly one position. LZ77 can therefore represent the string by a prefix copy, the new symbol, and a suffix copy, giving $O(k)$ phrases in total.

For the LZ-End lower bound, place a complete balanced binary tree over the string positions in $A_0$. At each node, consider consecutive updates within its interval where the first update is in  the left child and the next is in the right child. Bit reversal produces $k/2$ such events per tree level, for a total of $(k\log_2 k)/2$. We show that each event requires a distinct phrase in every LZ-End parsing.

The charging argument finds a phrase between the two updates of each event.
Between two such updates of symbols in the interval of a node $v$, choose the smallest position $j$ in the right child of $v$ that is a phrase boundary in some $A_i$ between the two updates, and then the smallest $i$ such that $A_i$ contains that
boundary. The preceding boundary lies in the left child. If its phrase started before the position changed by the left update, the phrase would
contain that update's fresh symbol. A source for the phrase would then have to occur after that update and end at $j$. This contradicts
the choice of the smallest $i$ such that $A_i$ contains that boundary. 
Hence the two phrase boundaries lie in opposite children of the node, and between the two update times. The lowest common ancestor of the two boundary coordinates identifies the node uniquely, and since the row intervals of events at the same node are disjoint, no phrase can receive two charges.

To obtain the result for binary strings, we encode the symbols with fixed-length codewords whose headers identify codeword starts. A fresh codeword forces a phrase boundary strictly inside it. The charging argument can therefore be repeated in bit coordinates, without assuming that parsing boundaries align with codewords. Any copied phrase containing a complete codeword for an updated symbol identifies both its array position and the alignment of its source.
Finally, numbering fresh symbols in order of appearance allows each new codeword to be represented by a constant number of non-overlapping LZ77 copies. The encoding preserves the $O(k)$ LZ77 bound and the $\Omega(k\log k)$ LZ-End lower bound.

Finally, we show that the result also holds for the version of LZ-End and LZ77 where a phrase consists of a copy and an extra character. This follows immediately from the fact that the two versions of the parsing only differ by a factor of 2 (for completeness, we prove this in Section~\ref{sec:appended-character}.

\subsection*{Related work}
Our snapshot string construction is closely related to the bit-reversal block family of strings of Fujie and Shibata~\cite{FS2026}. Their blocks distinguish three states (before, at, and after its transition) at each coordinate. 
Identifying the at-threshold state with the before-transition state produces exactly our intermediate snapshots, and their  ordered block string becomes the central portion of our construction. 
They use this family to establish sensitivity bounds and separations for several restricted compression schemes. They do this by bounding the length of a copy phrase in terms of the distance between its block and the source block. 
Our lower bound 
exploits the placement of LZ-End phrase boundaries: each left-to-right transition within a dyadic coordinate interval forces a distinct phrase, independently of how the parsing is chosen. We additionally establish that this charging argument survives binary encoding.

Kreft and Navarro~\cite{KN10,KN13} introduced LZ-End to combine LZ77-like compression with
fast extraction, and subsequently developed a self-index based on the
parsing.
Their original convention appends one character to a copy whose source ends
at a phrase boundary. 
Kreft and Navarro gave a growing-alphabet family for which the ratio between
LZ-End and non-overlapping LZ77 approaches two. Ideue et al.~\cite{Ideue2021} obtained the
same limiting ratio over a binary alphabet using the period-doubling
sequence.
These comparisons concern greedy parsings with an appended character;
their exact phrase counts should therefore be distinguished from ours.

Kempa and Saha~\cite{KempaSaha2022} proved a greedy LZ-End upper bound of $O(z\log^2 n)$,
where $z$ is the LZ77 phrase count with overlapping sources, and gave
linear-space data structures for random access.
Gawrychowski, Kosche, and Manea improved the phrase-count bound to
$O(z\log^2 n/\log\log n)$~\cite{GawrychowskiKoscheManea2023}.
These results concern greedy LZ-End. The logarithmic upper bound relevant
to our result concerns the optimal parsing and follows through grammar
compression; it makes no assertion about the performance of the greedy
LZ-End algorithm.

Bannai et al. studied optimal LZ-End under the whole-copy convention. They
proved that finding an optimal parsing is NP-hard, gave a MAX-SAT
formulation, and constructed binary strings whose greedy-to-optimal phrase
ratio approaches two~\cite{Bannai2023}.
Their separation measures the cost of greedy phrase selection. Our result
measures the source-boundary restriction itself: even the smallest legal
LZ-End parsing is a logarithmic factor larger than non-overlapping LZ77.

For grammar compression, the basic comparison with non-overlapping LZ77
comes from the conversions of Rytter~\cite{Rytter2003} and Charikar et al.~\cite[Sections~VII-D--VII-G]{Charikar2005}. A grammar of
size $g$ yields a non-overlapping parsing with at most $g$ phrases,
whereas a parsing with $z_{\mathrm{no}}$ phrases can be converted into a
grammar of size
$O(z_{\mathrm{no}}(1+\log(n/z_{\mathrm{no}})))$.
Je\.{z} later gave a simpler linear-time conversion with the same
asymptotic size guarantee~\cite{Jez2016}.
These results give the logarithmic upper bound on $g/z_{\mathrm{no}}$
used here.

In the other direction, Charikar et al. used a connection with addition
chains to construct strings satisfying
$g/z_{\mathrm{no}}=\Omega(\log n/\log\log n)$.
Hucke, Lohrey, and Reh obtained the same asymptotic separation over a
binary alphabet~\cite{HuckeLohreyReh2016}.
Bille, Gagie, G\o rtz, and Prezza extended the
$\Omega(\log n/\log\log n)$ separation to run-length SLPs, which also
allow rules of the form $X\to Y^k$~\cite{Bille2018}.
Kosolobov and Valenzuela explicitly identified the remaining
$\log\log n$ gap between the known grammar--LZ77 bounds as an open
problem~\cite[Sections~1 and~5]{KosolobovValenzuela2021}.
For ordinary grammars, Corollary~\ref{cor:grammar-separation} improves
the lower-bound factor to $\Omega(\log n)$ over a binary alphabet,
matching the logarithmic upper bound. To our knowledge, this tight
separation from non-overlapping LZ77 was not previously known.

The non-overlap restriction is essential to this comparison. For the
unary string $a^n$, overlapping LZ77 already uses $O(1)$ phrases while
the smallest grammar has size $\Theta(\log n)$; non-overlapping LZ77
uses $\Theta(\log n)$ phrases on the same string. Thus a logarithmic
grammar separation from overlapping LZ77 alone does not settle the
comparison studied here.

Our construction uses successive single-character substitutions, but the
comparison differs from sensitivity results. Akagi, Funakoshi, and Inenaga
proved tight constant bounds on the multiplicative sensitivity of several
LZ77 variants under a single edit~\cite{Akagi2023}, while Fujie et al. established tight
$\Theta(n^{2/3})$ worst-case additive sensitivity bounds for overlapping and
non-overlapping LZSS and for optimal LZ-End~\cite{Fujie2025}.
Here we concatenate all snapshots and compare different compression
measures on the resulting single string.

\paragraph{AI Declaration} This result was obtained with the help of AI, more precisely OpenAI ChatGPT Astra and Sol 6. The result was obtained through a longer discussion starting with another window-copying construction from the authors that together with the ruler string as basis, and a later a prefix window-copying construction combined with the ruler string, that ChatGPT Sol 6 proved gave a $\Omega(\log n / \log\log n)$ separation. The final string and proof was found by ChatGPT Astra. All proofs and constructions have been verified by the authors. 

\paragraph{Organization.}
Section~\ref{definitions-non-overlapping-lz77-and-lz-end} fixes the parsing
conventions. Section~\ref{sec:results} states the separation and its grammar
consequence. Section~\ref{sec:snapshot-construction} gives the
growing-alphabet construction and the charging argument.
Section~\ref{sec:binary-charging} establishes the binary LZ77 upper bound
and LZ-End lower bound. Section~\ref{sec:main-proof} combines these bounds
to prove the main theorem. Section~\ref{sec:appended-character} contains the extension to the version where copy phrases contains an extra character.

\section{Definitions}
\label{definitions-non-overlapping-lz77-and-lz-end}

All logarithms are base two. 
\paragraph{Strings, phrases, and boundaries}
\label{strings-intervals-and-boundaries}

A string~$S$ of length~$|S|=n$ is a sequence $S[0]\cdots S[n-1]$
of characters drawn from an alphabet $\Sigma$. The string
$S[i]\cdots S[j]$, denoted $S[i:j]$, is a substring of $S$.
By convention, $S[a:b)=S[a:b-1]$ for $a<b$, and $S[a:a)$ is empty.
A parsing partitions $S$ into nonempty phrases $f_1\cdots f_z$, where
\[
0=p_1<p_2<\cdots<p_{z+1}=n,\qquad f_i=S[p_i:p_{i+1}).
\]
A phrase boundary is a prefix length $p_i$; in particular, $p_1=0$
is the initial boundary. The end of phrase $f_i$ is $p_{i+1}$.

\paragraph{LZ77.}
The non-overlapping LZ77 parsing is formed greedily from left to right.
The $i$th phrase is either (a) the first occurrence of a character in
$S$, in which case it has length one, or (b) the longest nonempty
substring starting at $p_i$ with an occurrence ending to the left of
$p_i$. Precisely, a copied length $\ell$ is admissible when
\begin{equation}\label{eq:lz77-source}
1\le\ell\le n-p_i,\qquad
\exists q\ge0:\quad q+\ell\le p_i,
\quad S[q:q+\ell)=S[p_i:p_i+\ell).
\end{equation}
Here ``ending to the left'' refers to the last character position:
the exclusive source endpoint $q+\ell$ may equal $p_i$.
A legal non-greedy non-overlapping LZ77 parsing obeys the same rules
but need not choose the longest admissible copy. It is well-known that the greedy parsing is optimal for LZ77.

\paragraph{LZ-End.}
An LZ-End parsing partitions $S$ into phrases $f_1\cdots f_z$ with
the same boundary notation. Each phrase $f_i$ is either (a) the first
occurrence of a character, of length one, or (b) a substring with an
earlier occurrence ending at a phrase boundary. Thus a copied phrase
has a source boundary $p_j$ satisfying
\begin{equation}\label{eq:lzend-source}
1\le j\le i,\qquad |f_i|\le p_j,\qquad
f_i=S[p_i:p_{i+1})=S[p_j-|f_i|:p_j).
\end{equation}
The inequality $|f_i|\le p_j$ excludes negative source positions.
The case $j=i$ is permitted: the source may end at the current phrase
start. It is nevertheless wholly before the target. Sources may cross
several phrases; only their right endpoint must be a boundary.
The final phrase obeys exactly the same rules as every other phrase.

The \emph{greedy LZ-End parsing} chooses the longest admissible copied
phrase whenever a copy is available. If the next character has not
occurred before, it forms a one-character phrase. These rules always
permit progress: a character's first occurrence cannot lie inside a
copied phrase, so it ends a one-character phrase; that endpoint is an
available source for any later one-character copy of the character.
Admissible copied lengths need not form an interval.

Greedy LZ-End is not always optimal. For example, on $S=00100010$ its
phrases are
\[
0\mid0\mid1\mid00\mid01\mid0,
\]
whereas $0\mid0\mid1\mid0\mid0010$ is a legal five-phrase parsing.
The last phrase in the latter copies $S[0:4)$, ending at the boundary
$p_5=4$. In the greedy parsing, at position $3$ the longest copy is
$00$ ending at boundary $2$; at position $5$ the longest copy is $01$
ending at boundary $3$. The first three phrases are forced in every
legal parsing, and the remaining five characters cannot be copied
non-overlapping from a prefix of length three, so five phrases are optimal. 

\paragraph{Number of phrases.}
Let $z_{\mathrm{no}}(S)$ be the non-overlapping LZ77 phrase count, and define
\[
z_{\mathrm{end}}(S)=\min\{z:S\text{ has a legal LZ-End parsing with }z\text{ phrases}\}.
\]
Let $z_{\mathrm{end}}^{\mathrm{greedy}}(S)$ denote the greedy LZ-End phrase count.
Then
\[
z_{\mathrm{no}}(S)\le z_{\mathrm{end}}(S)\le z_{\mathrm{end}}^{\mathrm{greedy}}(S).
\]
The first inequality follows from the optimality of the greedy LZ77 parsing
since every legal LZ-End parsing is a legal non-overlapping LZ77 parsing.
The second follows by taking a minimum over legal LZ-End parsings.

\section{Results}
\label{sec:results}

Our first result is a separation between LZ77 and LZ-End. 
\begin{theorem}\label{thm:snapshot-separation}
 For every power of two \(k=2^h\), there is a binary string
\(T_k\) of length \(c(k+1)^2\), where
\(c=2\lceil\log_2(3k+1)\rceil+6\), such that
\[
z_{\rm no}(T_k)\le c+11k,
\qquad
z_{\rm end}(T_k)\ge \frac{k\log_2 k}{2}.
\]
Consequently \(z_{\rm end}/z_{\rm no}=\Omega(\log n)\), even over a binary
alphabet.   
\end{theorem}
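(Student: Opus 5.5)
The plan is to build the binary string from the growing-alphabet snapshot string and then prove the two bounds separately, with the lower bound done directly in bit coordinates.

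\textbf{Construction.} Let $A_0$ consist of $k$ distinct symbols, with positions indexed $0,\dots,k-1$. For $t=1,\dots,k$, obtain $A_t$ from $A_{t-1}$ by replacing the symbol at position $\mathrm{rev}_h(t-1)$ with a fresh symbol. Here $\mathrm{rev}_h$ is $h$-bit reversal, so every position is changed exactly once. Interleave fresh separators $\#_0,\dots,\#_k$ between the snapshots. Symbols are then numbered in order of first appearance, which uses at most $3k+1$ symbols. Each symbol is encoded by a codeword of length $c$: a header such as $1^{3}0$ followed by the binary number, with each bit padded as $b0$ or similar. The header pattern then occurs only at codeword starts, which fixes $c=2\lceil\log_2(3k+1)\rceil+6$. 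The total length is $c(k+1)^2$ if each block $A_t\#_t$ has $k+1$ codewords.

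\textbf{Upper bound.} We give an explicit non-overlapping parse. $A_0\#_0$ costs about $c$ phrases for the initial bit content plus a constant number per codeword. Because symbols are numbered in order of appearance, the codeword of symbol $m$ can be copied from the codeword of $m-1$ using $O(1)$ non-overlapping copies: split at the lowest bit that changes on increment, copy the common prefix, write the flipped tail. Each later block $A_t\#_t$ is then a copy of the prefix of $A_{t-1}$ up to the changed position, a new codeword in $O(1)$ phrases, a copy of the suffix, and the new separator in $O(1)$ phrases. Tallying the constants should give at most $11$ phrases per step and $c+11k$ overall.

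\textbf{Lower bound.} We charge events. For a node $v$ of the complete binary tree on the positions $[0,k)$, an event is a pair of consecutive updates, in bit-reversal order, that fall in $v$'s interval, the first in the left child and the second in the right child. Each level contributes $k/2$ events, giving $(k\log k)/2$ in total.

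For an event with update times $s<s'$, we take the smallest coordinate $j$ in the right-child interval, measured in bit coordinates within a block, that is a phrase boundary in some block among $A_s,\dots,A_{s'-1}$. Among those blocks we pick the earliest, $i$. Existence follows because the fresh codeword written at time $s'$ forces a boundary strictly inside it: it contains a new codeword, so it cannot be copied whole, and the header makes it unique.

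Next consider the phrase ending at $j$ in $A_i$. If it started before the left update's codeword, it would contain the fresh codeword from time $s$. Uniqueness of that codeword forces the phrase's source to lie in a block at or after $s$ and to align at the same coordinate. The source then ends at coordinate $j$ in an earlier block, which contradicts the minimality of $i$, or else it ends inside the right child at a coordinate below $j$. Hence the phrase starts between the left update and $j$, and so lies within $v$'s left/right split, with the LCA of its endpoints equal to $v$.

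The time intervals of events at the same node are disjoint, so each phrase is charged at most once.

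\textbf{Main obstacle.} The hardest step is the binary alignment argument. I must show that any copied phrase containing a complete updated codeword has a source aligned to the same in-block coordinate, and I must handle phrases whose boundaries fall inside codewords, including inside the separators. I would first try to prove that the header occurs only at codeword starts. From that, a copied full codeword determines its codeword index and therefore the block offset, because each fresh symbol occurs at only one array position in all later blocks.
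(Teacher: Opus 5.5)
Your outline is the paper's proof. It uses the same snapshot string with bit-reversal updates, first-appearance numbering with a header that occurs only at codeword starts, and the prefix/new codeword/suffix/separator parse for LZ77. The charging is also the same: smallest right-child bit-column, earliest row, a contradiction via the aligned source of the left update's codeword, then uniqueness via the LCA of the endpoint keys and disjoint row intervals.

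One step of the lower bound is wrong as written: you must search rows $A_s,\dots,A_{s'}$, not $A_s,\dots,A_{s'-1}$. The paper searches rows $s,\dots,t$. The boundary you use for existence lies inside the codeword that is fresh in row $s'$. Rows $s,\dots,s'-1$ need not contain any boundary in the right child's bit range. For example, take the root event $s=1$, $s'=2$, and cut row $0$ into single bits. Then row $1$ may contain one phrase running from the end of its new codeword to just past the header of $E(\#_1)$, copied from the same bit-columns of row $0$. That phrase spans the whole right child. Including row $s'$ is harmless: the next left update at a node comes strictly after the right update $s'$, so the closed intervals $[s,s']$ of successive left-to-right events at one node remain disjoint. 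Two smaller remarks. Your alternative ``or else it ends inside the right child below $j$'' is vacuous, since alignment forces the source to end exactly at bit-column $j$. Your ``starts before'' case correctly covers a predecessor boundary lying in the previous row's separator codeword, which the paper notes explicitly.

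For the upper bound, ``tallying the constants should give $11$'' does not yet prove the stated $c+11k$. You need each new codeword in exactly three phrases:
\begin{enumerate}
\item the common prefix with $E(m-1)$;
\item the single bit $1$;
\item the trailing zero run, which does not occur in $E(m-1)$ and must be copied from $E(0)$.
\end{enumerate}
Writing that run bit by bit still gives $O(k)$ overall, because a binary counter flips $O(1)$ bits amortized, but the constant is larger. With three phrases per codeword, row $0$ costs at most $c+3k$ and each later row costs $1+3+1+3=8$, for a total of $c+11k$. Also, your header $1110$ with padding $\beta 0$ gives codeword length $2d+4$. You must append $00$, or use the paper's $1111(0\beta_1)\cdots(0\beta_d)00$, to get the stated $c=2d+6$.
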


\subsection{Consequence for Grammar Compression}
\label{sec:grammar-consequence}

An acyclic grammar has one production for each variable, nonempty right-hand sides, and a designated start variable generating a single string. Its size is the total number of symbols on all right-hand sides. A binary SLP restricts productions to $X\to YZ$ or $X\to a$.

The following lemma uses the grammar-to-LZ77 conversion of Charikar
et al.~\cite{Charikar2005}.
We observe that the same conversion also satisfies the LZ-End requirement
that every copied source end at a phrase boundary.

\begin{lemma}[Grammar-to-LZ-End conversion]
\label{lem:grammar-to-lzend}
Let an acyclic grammar of size $g$ with $r$ reachable variables generate a
nonempty string $S$. Then $S$ has a legal whole-copy LZ-End parsing with
at most $g-r+1\le g$ phrases. If the grammar is a binary SLP with $r$
variables, at most $r$ phrases suffice.
\end{lemma}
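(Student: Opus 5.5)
We build the parsing by expanding the start variable left to right, in the style of Charikar et al. Let $X$ be the start variable. We define a recursive procedure $\mathrm{emit}(Y)$ that writes the phrases covering the expansion $\mathrm{val}(Y)$ at the current output position.

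\begin{itemize}[label=$\cdot$]
\item If $Y$ has been expanded before, its earlier expansion occupies an interval $[a,b)$ of $S$, where $b$ is a phrase boundary. We emit $\mathrm{val}(Y)$ as a single copied phrase with source $S[a:b)$.
\item Otherwise we process the right-hand side of $Y$ symbol by symbol. A terminal becomes one phrase: a fresh character if it is new, and otherwise a copy of length one, whose source ends at the boundary following the character's first occurrence. A variable $Z$ is handled by calling $\mathrm{emit}(Z)$.
\item After processing the right-hand side, we record the interval $[a,b)$ of this first expansion of $Y$.
\end{itemize}

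The key invariant is that every emission, whether a single phrase or a recursive call, starts and ends at phrase boundaries. Consequently the first expansion of every variable is a union of consecutive phrases, and its end $b$ is a phrase boundary. Any later copy of $\mathrm{val}(Y)$ then has a source ending at $p_j=b$, with $p_j$ at most the current phrase start. This is exactly condition \eqref{eq:lzend-source}, allowing $j=i$ when the source ends exactly where the copy starts.

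The sources also never overlap their targets. A copy of $Y$ is emitted only after the first expansion of $Y$ has finished, since a variable cannot occur inside its own expansion in an acyclic grammar. Hence $b$ is at most the target start.

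\textbf{Counting phrases.} Each reachable variable $Y$ is first-expanded exactly once. Every symbol occurrence on a right-hand side of a first-expanded variable produces at most one phrase directly: either a terminal phrase or a single copy phrase. However, a symbol that is the first occurrence of a variable produces no phrase of its own, because its phrases are counted inside that variable's own right-hand side.

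Each of the $r$ reachable variables except the start variable $X$ has exactly one such first-occurrence symbol. So the total is at most $g-(r-1)=g-r+1$, which is at most $g$ because $r\ge1$.

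\textbf{Binary SLP case.} Count phrases by their kind instead:
\begin{itemize}[label=$\cdot$]
\item each terminal rule $Y\to a$ yields at most one phrase at its first expansion;
\item each later occurrence of a variable yields one copy phrase.
\end{itemize}
We charge each phrase to a variable: a terminal phrase to its terminal rule, and a copy phrase to the variable whose first expansion contains the second child occurrence. The delicate point is the parse tree of first expansions. In this tree each internal rule $Y\to ZW$ has two child slots, and we must show the number of repeat occurrences (copy leaves) plus terminal leaves is at most $r$.

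The first-expansion tree contains every variable exactly once as a node. Its leaves are the copy leaves and the terminal rules. A binary tree with $r$ variable nodes, where only the $r_2$ binary-rule nodes have two children and the terminal rules are leaves, has $r_2+1$ leaf positions among the variables and children. The phrase count is therefore $r_2+1$, which is at most $r$ because there is at least one terminal rule. Checking this tree count carefully, including the case where a child is a repeat of an already expanded variable, is the main obstacle; the rest is routine bookkeeping.
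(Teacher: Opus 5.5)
Your proposal is correct and follows the paper's proof. It uses the same left-to-right expansion, with each repeated variable emitted as one copy of its first expansion, and the same acyclicity argument for non-overlap and boundary-ending sources; your count (RHS symbols of the reachable variables minus the $r-1$ first occurrences) is exactly the paper's leaf count $g-r+1$ of the pruned derivation tree. In the binary case you can drop the vague charging sentence, and the step you flag as the main obstacle is immediate: since $g=2r_2+r_1$ over the reachable rules, your general bound already gives $g-r+1=r_2+1\le r$, which is the paper's computation.
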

\begin{proof}
Traverse the derivation tree from left to right. At the first occurrence
of each variable, expand its production recursively. At every subsequent
occurrence, output its complete expansion as one copied phrase, using the
first occurrence as its source. At terminal leaves, output a singleton,
as a literal if it is the character's first occurrence and otherwise as
a copy of its first occurrence.

The first expansion of a variable is finished before a later occurrence
is visited: otherwise that later occurrence would be a descendant of the
first, contradicting acyclicity. Its source is therefore non-overlapping.
Moreover, its first expansion is a concatenation of complete output
phrases, so its right endpoint is a phrase boundary. First occurrences
of terminal characters are singleton phrases as well. Thus every output
phrase is legal for LZ-End.

In the pruned derivation tree, each reachable variable is expanded exactly
once. There are $r$ internal nodes and at most $g$ edges, so there are at most
$g-r+1$ leaves, one for each output phrase. For a binary SLP, if $r_2$ variables
have two-symbol right-hand sides and $r_1$ have terminal right-hand sides,
this count is $r_2+1\le r_2+r_1=r$, since $S$ is nonempty and hence
$r_1\ge1$.
\end{proof}

\begin{corollary}
\label{cor:grammar-separation}
For the binary strings $T_k$ of Theorem~\ref{thm:snapshot-separation}, every
acyclic grammar generating $T_k$ has size at least $k\log_2 k/2$.
Every binary SLP generating $T_k$ also has at least $k\log_2 k/2$ variables.
In particular, for $n=|T_k|$,
\[
 \frac{g(T_k)}{z_{\mathrm{no}}(T_k)}=\Omega(\log n).
\]
\end{corollary}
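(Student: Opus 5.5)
The corollary follows by chaining Lemma~\ref{lem:grammar-to-lzend} with the lower bound of Theorem~\ref{thm:snapshot-separation}.

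First, take any acyclic grammar of size $g$ generating $T_k$, and let $r$ be its number of reachable variables. Lemma~\ref{lem:grammar-to-lzend} produces a legal LZ-End parsing of $T_k$ with at most $g-r+1\le g$ phrases. Every legal parsing has at least $z_{\mathrm{end}}(T_k)$ phrases, so
\[
g\ge z_{\mathrm{end}}(T_k)\ge \frac{k\log_2 k}{2}
\]
by Theorem~\ref{thm:snapshot-separation}. Since $T_k$ is nonempty, $r\ge1$, so the inequality $g-r+1\le g$ is valid.

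Second, the binary SLP case is identical. A binary SLP with $r$ variables yields a legal LZ-End parsing with at most $r$ phrases, again by Lemma~\ref{lem:grammar-to-lzend}, so $r\ge z_{\mathrm{end}}(T_k)\ge k\log_2 k/2$. If the SLP has unreachable variables, one can apply the lemma to the reachable part, which has fewer variables, so the bound only gets stronger.

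Third, the asymptotic ratio follows from the explicit parameters of Theorem~\ref{thm:snapshot-separation}. We have $c=\Theta(\log k)$, so $n=|T_k|=c(k+1)^2=\Theta(k^2\log k)$. Hence $\log n=\Theta(\log k)$, since $\log n=2\log k+O(\log\log k)$. Also $z_{\mathrm{no}}(T_k)\le c+11k=O(k)$. Combining these,
\[
\frac{g(T_k)}{z_{\mathrm{no}}(T_k)}\ge\frac{k\log_2 k/2}{O(k)}=\Omega(\log k)=\Omega(\log n).
\]

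No step is a genuine obstacle. The only care needed is the bookkeeping relating $\log k$ to $\log n$, and checking that the reachable-variable count in the lemma applies to arbitrary grammars, which may contain unreachable variables.
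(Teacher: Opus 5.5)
Your proposal is correct and is the same argument as the paper's: convert any acyclic grammar (or binary SLP) into an LZ-End parsing via Lemma~\ref{lem:grammar-to-lzend}, apply the bound $z_{\mathrm{end}}(T_k)\ge k\log_2 k/2$ from Theorem~\ref{thm:snapshot-separation}, and finish with $z_{\mathrm{no}}(T_k)\le c+11k$ and $\log n=\Theta(\log k)$. Your remarks on $r\ge 1$ and on unreachable SLP variables are harmless bookkeeping that the paper leaves implicit.
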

\begin{proof}
Apply Lemma~\ref{lem:grammar-to-lzend} to the LZ-End lower bound in
Theorem~\ref{thm:snapshot-separation}, and use
$z_{\mathrm{no}}(T_k)\le c+11k$ and $\log n=\Theta(\log k)$.
\end{proof}

\section{Construction over a Growing Alphabet}
\label{sec:snapshot-construction}
We first show a construction with a large alphabet. Let \(k = 2^h\). Our alphabet has size $3k+1$ and consists of the following mutually distinct symbols: 

\[
a_1,\ldots,a_k,\;
b_1,\ldots,b_k,\;
\#_0,\ldots,\#_k.
\]

Let \(p_1,\ldots,p_k\) be a permutation of \([1,k]\). We will define the precise permutation later. 
We construct our string $\mathcal{S}$ as follows.
Initially
\[
A_0 = a_1\cdots a_k.
\]
At iteration \(t\), obtain \(A_t\) from \(A_{t-1}\) by replacing
\(a_{p_t}\) at position \(p_t-1\) with \(b_{p_t}\). Thus each position
is changed exactly once, and \(A_k=b_1b_2\cdots b_k\). Define

\[
\mathcal{S} = A_0 \#_0 A_1 \#_1 \cdots A_k \#_k\;.
\]

The length of $\mathcal{S}$ is \((k+1)^2\). We call the substring $A_i\#_i$ the $i$th row, for $0\le i\le k$; it starts at string position $i(k+1)$. See Figure~\ref{fig:string_illustration} for an illustration of the construction.
A symbol at position $j'$ in the string is in column $j$ if it is the $j$th symbol in the row, that is $j' \equiv j-1 \pmod{k+1}$. A boundary at column $j\in[0,k+1]$ in row $i$ is a cut after the first $j$ symbols of the row, at prefix length $i(k+1)+j$. Thus a phrase in row $i$ with \emph{boundary columns} $l,r$ contains the symbols in columns $l+1,\ldots,r$ of row $i$.

\begin{figure}
    \centering    \includegraphics[width=0.8\linewidth]{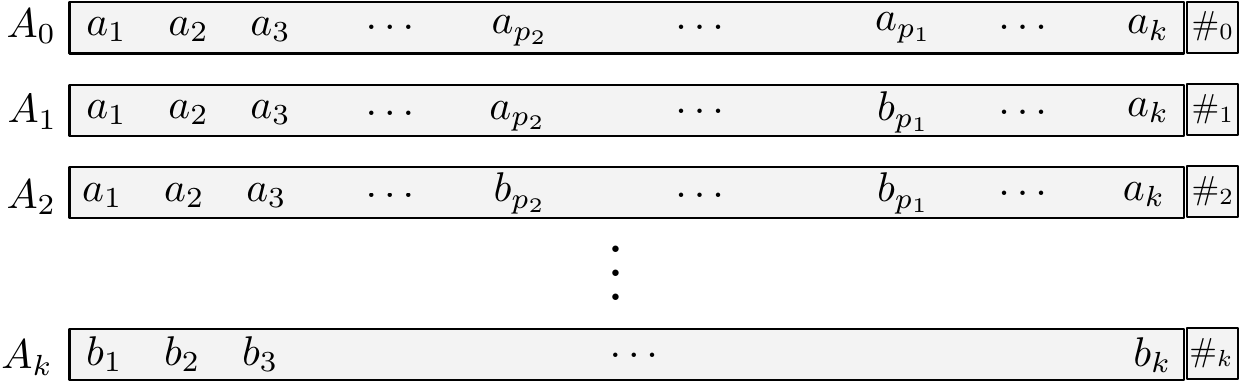}
    \caption{Illustration of the strings $A_0...A_k$. Each $A_i$ is thought of as a row, and the positions within the rows are thought of as columns.}
    \label{fig:string_illustration}
\end{figure}

\paragraph{LZ77 parsing of $\mathcal{S}$}
There is a legal non-overlapping LZ77 parsing
with at most \(5k+1\) phrases: the initial row and separator take \(k+1\)
literals.  Each later row can be parsed with at most 4 phrases: (i) its common prefix with the preceding row,
(ii) the new literal \(b_{p_t}\), (iii) its common suffix with the preceding row, and (iv) a phrase for the new separator. 
Thus,
\begin{equation}\label{eq:z_no-large-alphabet}
    z_{\mathrm{no}} \leq 5k+1\;.
\end{equation}

\paragraph{Properties of the string $\mathcal{S}$} Note that every separator is a forced singleton in every legal LZ-End parsing: it is new and has length one, and cannot occur inside a copied phrase. Consequently each row has its own partition into phrases. Moreover, in row \(t\),
the new \(b_{p_t}\) is a forced singleton, so columns \(p_t-1\) and \(p_t\) are boundaries of phrases in this row.
%
Every copied phrase inside a row must have a source inside a strictly earlier row and at exactly the same coordinates. This is true, since every symbol identifies its coordinate uniquely and a source cannot cross a separator. 

\subsection{LZ-End Lower Bound}
\label{sec:interleaving}

We first prove the following lemma. Let $\mathcal{T}_k$ be the complete balanced binary tree on coordinates \([1,k]\). Each node $v$ corresponds to an interval of coordinates $[l_v,r_v]$, i.e., the nodes correspond to all the dyadic intervals of \([1,k]\).

\begin{lemma}\label{lem:consecutive_updates}
   Let \(v=[i,j]\) be a node in $\mathcal{T}_k$, with children \([i,m]\) and \([m+1,j]\). Let \(s<t\) be consecutive times in the update subsequence restricted to \(v\), and suppose
    \[
    p_s \in [i,m],
    \qquad
    p_t \in [m+1,j].
    \] 
    Then every legal LZ-End parsing of \(\mathcal{S}\) contains a  phrase  in a row \(u \in [s,t]\) with boundary columns $(l, r)$ such that
    \begin{equation}\label{eq:growing-charged-phrase}
    p_s \le l \le m < r \le p_t.
    \end{equation}
    \end{lemma}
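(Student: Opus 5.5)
We argue by an extremal choice over the rows $s,\dots,t$ and the columns $m+1,\dots,p_t$. We use only the properties of $\mathcal{S}$ noted above. First, separators are forced singletons, so each row is partitioned into phrases on its own. Second, in row $s$ the columns $p_s-1$ and $p_s$ are boundaries, and in row $t$ the columns $p_t-1$ and $p_t$ are boundaries. Third, a copied phrase inside a row has its source in a strictly earlier row, at exactly the same columns.

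\emph{Step 1 (choice of $r$).} Consider the set of columns $c\in[m+1,p_t]$ that are phrase boundaries in some row $u\in[s,t]$. This set is nonempty, because column $p_t\ge m+1$ is a boundary in row $t$. Let $r$ be its minimum. Among the rows $u\in[s,t]$ having a boundary at column $r$, choose the smallest such $u$. Let $(l,r)$ be the boundary columns of the phrase of row $u$ that ends at column $r$. Since $r\ge m+1\ge 1$, this phrase lies inside row $u$ and $l<r$. No column in $[m+1,r)$ is a boundary of row $u$, by minimality of $r$. Hence $l\le m$, so $l\le m<r\le p_t$.

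\emph{Step 2 ($l\ge p_s$).} Suppose for contradiction that $l<p_s$. Then the phrase covers columns $p_s$ through $r$, and $r\ge m+1>p_s$, so its length is at least two.

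If $u=s$, column $p_s$ is a boundary of row $s$ lying strictly between $l$ and $r$. This contradicts the fact that $(l,r)$ is a single phrase.

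If $u>s$, row $u$ carries $b_{p_s}$ in column $p_s$, because updates are never undone. The phrase is therefore a copied phrase containing $b_{p_s}$. Its source lies in an earlier row $u'<u$ at the same columns $(l,r)$. The source contains $b_{p_s}$, and this symbol first appears in row $s$, so $u'\ge s$. By the LZ-End rule the source ends at a phrase boundary, namely column $r$ of row $u'$. Thus $u'\in[s,u)$ has a boundary at column $r\in[m+1,p_t]$, contradicting the minimality of $u$. Hence $p_s\le l$, which gives \eqref{eq:growing-charged-phrase}.

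\emph{Main obstacle.} The delicate point is the order of the two minimizations. We must minimize the column first and the row second. Minimality of the column is what forces $l\le m$. Minimality of the row is what is contradicted by the source, which lies at the same column but in an earlier row within $[s,t]$. The lower bound $u'\ge s$ comes from the fresh symbol $b_{p_s}$.

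The hypothesis that $s,t$ are consecutive updates restricted to $v$ is not needed for this lemma. It matters only later, for the disjointness of row intervals when charging. Likewise, the case $p_t-1=m$ needs no separate treatment, since it is covered by the general argument.
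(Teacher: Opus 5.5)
Your proof is correct and follows the paper's argument: minimize the right-child boundary column first and then the row, obtain $l\le m$ from column-minimality, and refute $l<p_s$ because the copied phrase would contain $b_{p_s}$, forcing a source boundary at column $r$ in an earlier row of $[s,t]$. Your separate treatment of $u=s$ via the forced singleton is valid but unnecessary, since the source argument already yields the contradiction $s\le u'<u=s$ in that case.
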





\begin{proof}
     Row \(t\) has the forced boundary \(p_t\) in the right child. Among all boundaries in the right child occurring in rows \(s,\ldots,t\), choose the smallest column \(r\) containing such a boundary. Then choose the first row \(u\) in \([s,t]\) containing a phrase boundary in column r. We have \(r\le p_t\). Its predecessor boundary \(l\) in row \(u\) satisfies \(l\le m\), by minimality of \(r\). See Figure~\ref{fig:interval_under_node}. 
     
     It remains to show that \(l \geq p_s\). We do this by contradiction. Suppose \(l<p_s\). Since \(r>m\ge p_s\), the phrase has length at least two and therefore is copied. It contains \(b_{p_s}\), since coordinate \(p_s\) was updated at \(s\) and is not updated again. Its source is in a row \(w<u\) at the same coordinates, hence its source endpoint is a boundary at column \(r\) in row \(w\). Since \(b_{p_s}\) is new at time \(s\), necessarily \(w\ge s\). This contradicts the choice of \(u\) as the first row in \([s,t]\) containing boundary \(r\). Thus \(l\ge p_s\), proving \eqref{eq:growing-charged-phrase}.
\end{proof}

\begin{figure}
    \centering
    \includegraphics[width=0.5\linewidth]{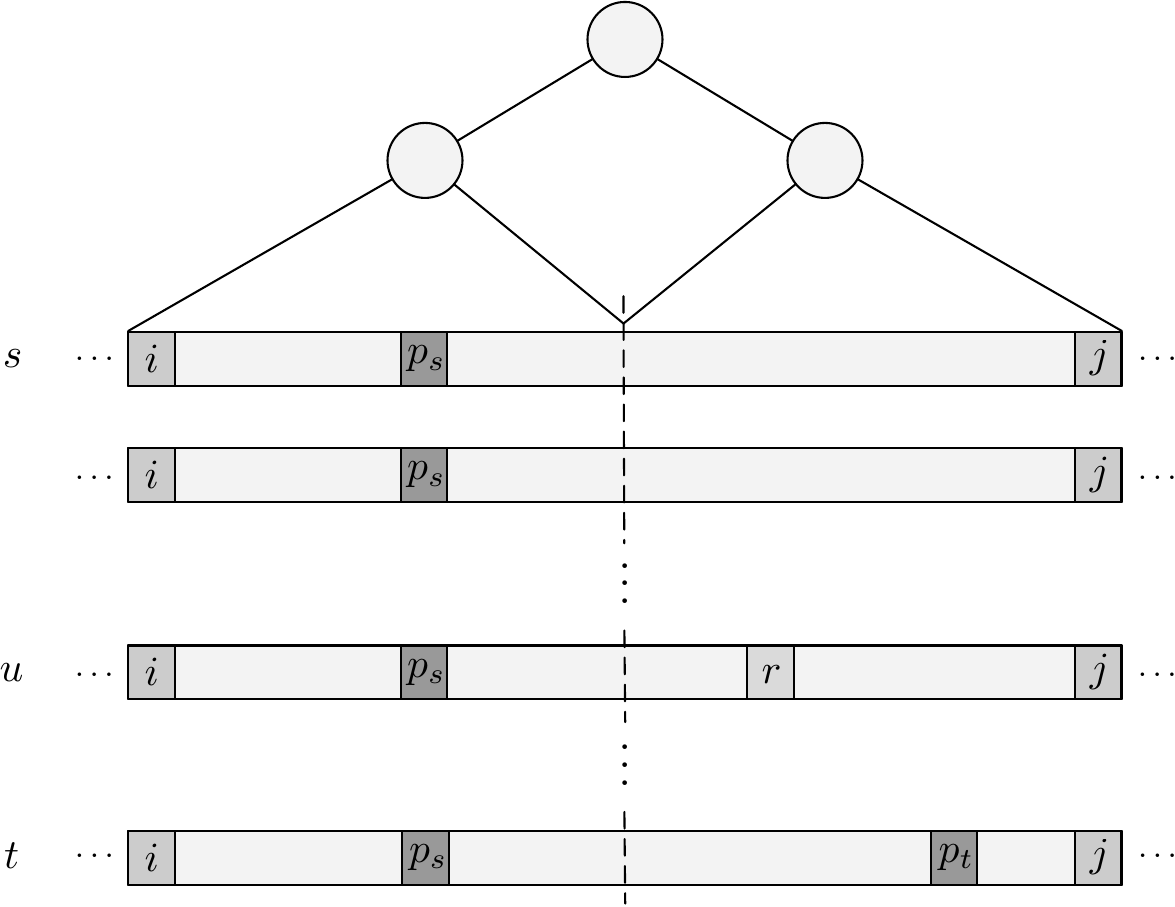}
    \caption{Illustration of the situation in Lemma~\ref{lem:consecutive_updates}. All updates to the string strictly between rows $s$ and row $t$ happen outside the interval of $v$.}
    \label{fig:interval_under_node}
\end{figure}

We call two consecutive updates $p_s$ and $p_t$ in the interval of a node $v$ an event $(v,s,t)$ if $p_s$ and $p_t$ belong to the intervals of the left and right child of $v$, respectively.

We will give a permutation that induces a large number of  such events \((v,s,t)\). To show a lower bound on the number of phrases we charge the event \((v,s,t)\) to the phrase found in the proof of the Lemma~\ref{lem:consecutive_updates}. For fixed \(v\), the intervals \([s,t]\) belonging to consecutive events are disjoint. Furthermore, a phrase satisfying \eqref{eq:growing-charged-phrase} has $l$ and $r$ in opposite children of exactly one node from $\mathcal{T}_k$, namely, the lowest common ancestor of coordinate keys \(l\) and \(r\) in $\mathcal{T}_k$. Thus no phrase is charged twice.

\paragraph{Permutation}
\label{sec:bit-reversal}
Recall that, $k=2^h$. We define the permutation as follows. 
Let 
\begin{align}\label{eq:permutation}
    p_t = 1 + \mathrm{reverse}_h(t-1),
\end{align}

where \(\mathrm{reverse}_h\) reverses \(h\) binary bits. See Figure~\ref{fig:string_as_matrix} for an illustration of the string construction using this permutation.
The permutation has the following property:
At every internal node $v$ in $\mathcal{T}_k$, the update subsequence restricted to $v$ alternates between
its children, starting in the left child.

\noindent Thus, a node of size \(2^q\) has
\(2^{q-1}\) left-to-right events. Summing over all nodes we get, 
\[
\sum_{d=0}^{h-1} 2^d \cdot \frac{k}{2^{d+1}} = \frac{kh}{2}\;.
\]

\noindent Therefore every legal LZ-End parsing of \(\mathcal{S}\) has at least

\[
\frac{kh}{2}
\]
phrases. Combined with \(z_{\rm no}(\mathcal{S})\le5k+1\) and \(n=(k+1)^2\), this gives an
\(\Omega(\log n)\) separation using a large alphabet. In the next section we show how to binary encode the alphabet symbols while retaining the lower bound.

\begin{figure}
    \centering
    \includegraphics[width=0.35\linewidth]{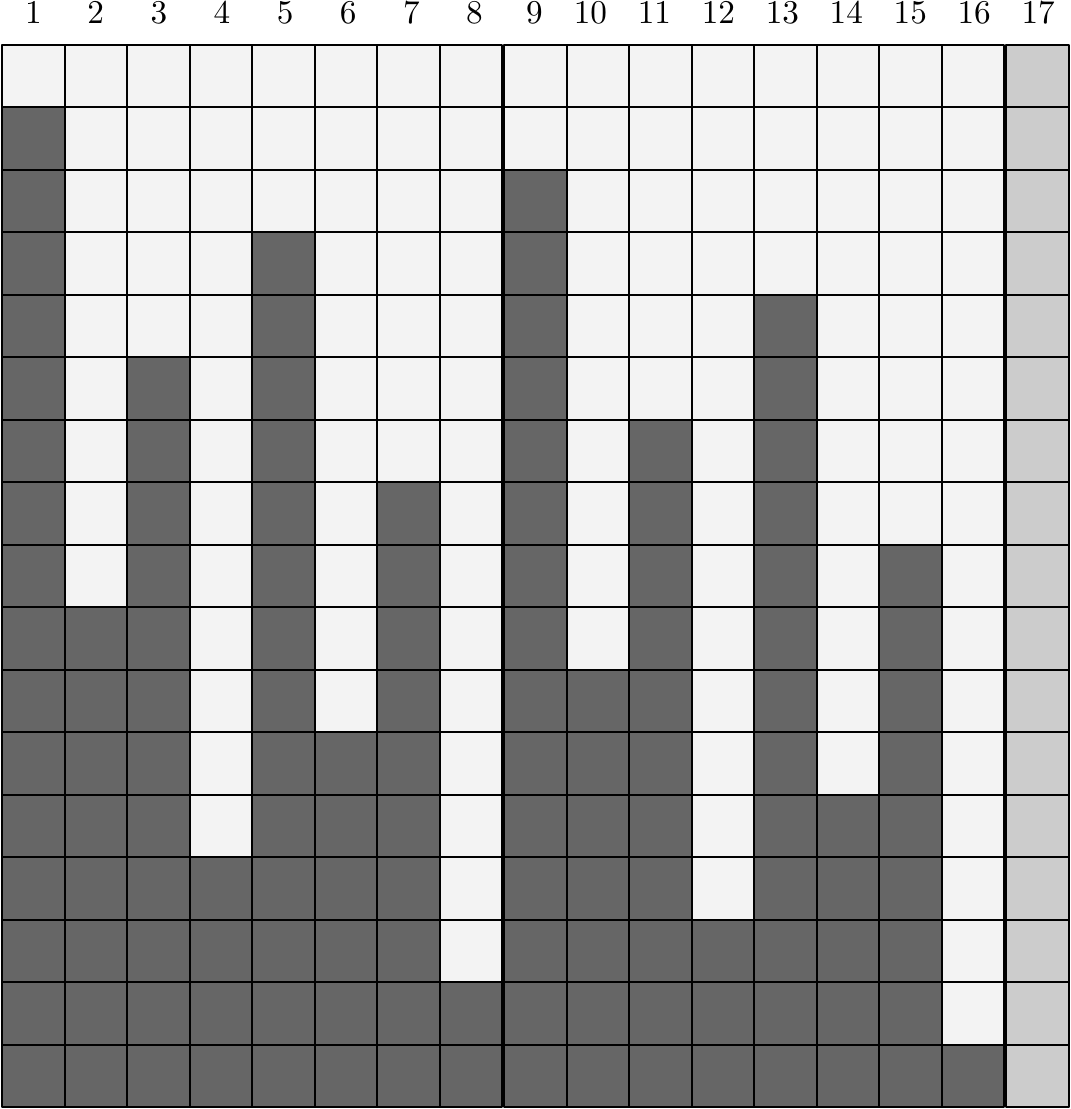}
    \caption{Illustration of the $\mathcal{S}$ with $k=16$  constructed using the permutation defined in~\eqref{eq:permutation}. A light gray square in column $i$ means that the position contains $a_i$, a dark gray square means that the position contains $b_i$. The last position in row $j$ is $\#_j$}
    \label{fig:string_as_matrix}
\end{figure}


\section{Construction with Binary Strings}
\label{sec:binary-charging}

To encode the alphabet from the previous section we first assign fresh integer labels in order of first appearance. That is,
\[
a_i=i-1,\qquad
\#_0=k,\qquad
b_{p_t}=k+2t-1,\qquad
\#_t=k+2t.
\]
Thus integer labels follow the order of first appearance, while each
symbol \(b_i\) is named by its coordinate.

Let \(d=\lceil\log_2(3k+1)\rceil\). If the \(d\)-bit representation of
\(j\) is \(\beta_1\ldots\beta_d\), set
\[
E(j)=1111\,(0\beta_1)\cdots(0\beta_d)\,00.
\]
The length of the codeword $E(j)$ is \(c=2d+6\).
Extend $E$ to strings by concatenation. Define our new string to be \(\mathcal{S}_B=E(\mathcal{S})\)  of length \(c(k+1)^2\).

The substring \(1111\) occurs exactly at codeword starts. Thus every occurrence of a complete \(E(j)\) determines \(j\) uniquely. 

\medskip

The following argument does not assume that phrase boundaries are codeword-aligned. In row \(u\), use bit-column coordinates from \(0\) to \(ck\) for boundaries in its array part. 
Rows start at multiples of \(L=c(k+1)\). A boundary at bit-column $x$ in row $u$ is the prefix length $uL+x$.

\subsection{LZ77 Upper Bound}  There is a legal non-overlapping LZ77 parsing of $\mathcal{S}_B$ using at most $11k+c$ phrases.

\begin{lemma}\label{lem:lz77-binary-string}
    We have,
    \[z_{\mathrm{no}}(\mathcal{S}_B) \leq 11k + c\;.\]
\end{lemma}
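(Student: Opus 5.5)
The plan is to mirror the growing-alphabet parsing behind \eqref{eq:z_no-large-alphabet}: row $0$ is handled by a bounded number of phrases, and each later row $t\ge1$ by a constant number of non-overlapping copies. The only new ingredient is encoding the fresh codewords $E(k+2t-1)=E(b_{p_t})$ and $E(k+2t)=E(\#_t)$ cheaply. The budget is $c$ phrases for a bootstrap plus about $11$ phrases per row, which gives $11k+c$.

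\textbf{Row $0$ and $\#_0$.} These encode the integers $0,1,\dots,k$ in increasing order. I will first emit the literals and short copies needed to spell $E(0)$; this takes at most $c$ phrases, since $E(0)=1111(00)^d00$ and only the characters $0,1$ need literals. Every later codeword $E(j)$, $1\le j\le k$, is then built from $E(j-1)$ by the binary-increment pattern. If $j-1$ ends in a $0$ followed by $s$ ones, then $E(j)$ agrees with $E(j-1)$ up to the pair encoding the last $0$, has $01$ there, and is followed by $(00)^s00$. So $E(j)$ is
\[
\text{(prefix copied from }E(j-1)\text{)}\cdot 01\cdot(00)^{s}00 .
\]
The piece $01$ is copied from an earlier occurrence of ``$01$''. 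The run $(00)^{s+1}$ is a substring of the zero run $(00)^d00$ in $E(0)$, because $s+1\le d+1$, and its source lies entirely to the left. This costs a constant number of phrases per codeword, say at most $3$. Row $0$ then costs $c+3k$ phrases, or possibly less if the prefix copy and the $01$ merge.

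\textbf{Row $t\ge1$.} Each such row is parsed as follows:
\begin{enumerate}[label=(\roman*)]
\item the common prefix with row $t-1$, copied in one phrase from row $t-1$;
\item $E(k+2t-1)$;
\item the common suffix of the array part, copied in one phrase from row $t-1$;
\item $E(k+2t)$.
\end{enumerate}
The integers $k+2t-1$ and $k+2t$ are consecutive with the previous fresh label $k+2t-2$, which is $\#_{t-1}$, or $\#_0=k$ when $t=1$. So each fresh codeword equals the codeword of its predecessor integer plus one. The previous label's codeword has already appeared in full, and the increment pattern above lets us write $E(j+1)$ with at most $3$ phrases copied from $E(j)$ and from the zero run. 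This gives at most $1+3+1+3=8$ phrases per row, comfortably within $11$. The slack absorbs boundary cases: empty prefix or suffix, and the prefix copy of $E(j+1)$ possibly being shorter because the leading $1111$ must come from a codeword start. All copies are non-overlapping because every source lies in an earlier row or an earlier codeword. Summing gives at most $c+3k+8k\le 11k+c$.

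\textbf{Main obstacle.} The delicate step is verifying that every piece in the increment decomposition really occurs strictly to the left. Two cases need care. First, the carry can make $s=d-1$ or cause a length jump; this cannot happen because labels stay below $3k+1\le 2^d$. Second, the needed zero run of length $2(s+1)$ plus the trailing $00$ must fit inside $E(0)$'s run. I would check this once via the explicit form of $E(0)$, and handle $j=0\to1$ and the first row separately if needed, with the constant $11$ leaving room.
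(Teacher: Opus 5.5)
Your proposal is correct and follows the paper's proof essentially verbatim. Both use $c$ phrases for $E(0)$, three phrases for each of the $3k$ later fresh codewords via the binary-increment pattern (common prefix with $E(j-1)$, a short piece carrying the changed bit, and a zero run copied from $E(0)$), and two extra row copies per row $t\ge1$, giving $c+3k+8k=11k+c$. Splitting off the middle piece as $01$ rather than the single bit $1$ is harmless, since $01$ already occurs at the junction of $E(0)$ and the next codeword's header; also, $s=d-1$ can occur without trouble, and only the overflow case $s=d$ is excluded by $3k<2^d$.
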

\begin{proof}
  We can encode \(E(0)\) in at most \(c\) one-bit phrases. We will prove that the remaining $3k$ symbols can be parsed using at most 3 phrases each. A subsequent new \(E(j)\), given the already occurring \(E(j-1)\), can be parsed using at most 3 phrases: (i) the common code prefix with $E(j-1)$, (ii) the changed 1 bit, and (iii) the trailing zeros, which can be copied from $E(0)$. Since the new symbols come in sorted order we can parse all new symbols using a total of $9k + c$ phrases. 

  Each new row needs at most two additional phrases (in addition to the new symbol) exactly as before. Hence, $\mathcal{S}_B$ can be parsed into at most $11k+c$ phrases.
\end{proof}

\subsection{LZ-End Lower Bound}

We first prove the binary analogue of the charging lemma from
Section~\ref{sec:interleaving}, allowing phrase boundaries to lie inside
codewords.

\begin{lemma}\label{lem:binary-charging}
Let \(v=[a,b]\) be a node of \(\mathcal{T}_k\), with children
\([a,m]\) and \([m+1,b]\). Let \(s<t\) be consecutive times in the
update subsequence restricted to \(v\), and suppose
\[
p_s\in[a,m],\qquad p_t\in[m+1,b].
\]
Then every legal LZ-End parsing of \(\mathcal{S}_B\) contains a phrase
in the array part of a row \(u\in[s,t]\), with boundary bit-columns
\(y,x\) satisfying
\begin{equation}\label{eq:binary-charged-phrase}
c(p_s-1) < y \le cm < x \le cb.
\end{equation}
\end{lemma}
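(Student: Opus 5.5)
The plan is to repeat the proof of Lemma~\ref{lem:consecutive_updates} in bit coordinates. The only new ingredient replaces ``every symbol identifies its coordinate'': a copied phrase containing a complete codeword $E(j)$ has a source aligned with the target up to a multiple of the row length $L$. The codeword of column $j$ of row $u$ occupies bit-columns $[c(j-1),cj)$ of that row.

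First I establish forced boundaries. In row $t$, the codeword $E(b_{p_t})$ occurs for the first time. A phrase containing it completely cannot be a literal, since its length is at least $c>1$. It also cannot be copied, since its source would contain an earlier occurrence of $E(b_{p_t})$. Hence some phrase boundary lies strictly inside this codeword, at a bit-column in $(c(p_t-1),cp_t)\subseteq(cm,cb]$ of row $t$.

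Next I make the extremal choice. Among all boundaries at bit-columns in $(cm,cb]$ of rows $s,\dots,t$, which is a nonempty set, let $x$ be the smallest bit-column. Then let $u\in[s,t]$ be the first row with a boundary at bit-column $x$. Let $y$ be the boundary preceding $uL+x$, written as a bit-column of row $u$; it may be $\le0$ if the phrase starts in an earlier row. If $y>cm$, then $y\in(cm,x)$ is a boundary in row $u$, contradicting the minimality of $x$. So $y\le cm$, and it remains to show $y>c(p_s-1)$. In particular this places the phrase in the array part of row $u$, since $c(p_s-1)\ge0$.

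The main step, and the one I expect to need the most care, is the alignment argument. Suppose $y\le c(p_s-1)$. Since $x>cm\ge cp_s$, the phrase $[uL+y,uL+x)$ contains the complete codeword at bit-columns $[c(p_s-1),cp_s)$ of row $u$. Because $u\ge s$ and coordinate $p_s$ is never updated again, this codeword is $E(b_{p_s})$. The phrase has length greater than one, so it is copied from a source $[q,q+\ell)$ ending at a phrase boundary, with $q+\ell\le uL+y$.

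The source contains $E(b_{p_s})$ at the same offset. The header $1111$ occurs only at codeword starts, and $E$ is injective, so this occurrence is a genuine codeword of $\mathcal{S}_B$ that encodes $b_{p_s}$. That symbol occurs only at column $p_s$ of rows $w\ge s$. Hence the source is the target shifted by $(u-w)L$ for some row $w$ with $s\le w$. The source ends at or before the target start, which lies at or before row $u$'s codeword for $p_s$, so $w<u$.

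The source endpoint is therefore a phrase boundary at bit-column $x\in(cm,cb]$ of row $w\in[s,u)$. This contradicts the choice of $u$. Hence $y>c(p_s-1)$, which establishes \eqref{eq:binary-charged-phrase}.

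The delicate points to verify are twofold. First, the phrase may start before row $u$, across a partially covered separator codeword; the alignment argument handles this case, since it uses only the complete codeword $E(b_{p_s})$. Second, the header-uniqueness claim must rule out a spurious occurrence of $E(b_{p_s})$ straddling two codewords.
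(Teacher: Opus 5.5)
Your proposal is correct and follows the paper's proof essentially step for step. The three steps match: the forced boundary strictly inside the fresh codeword $E(b_{p_t})$; the choice of the smallest bit-column $x\in(cm,cb]$ and then the first row $u$; and the contradiction in which the complete codeword $E(b_{p_s})$ fixes the source translation to $(u-w)L$ with $s\le w<u$, while your header-uniqueness and $w<u$ checks spell out details the paper leaves terse.
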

\begin{proof}
Fix an arbitrary legal LZ-End parsing of \(\mathcal{S}_B\).
The codeword \(E(b_{p_t})\) first appears in row \(t\). There must be
a phrase boundary \(\eta_t\) strictly inside this codeword, so
\begin{equation}\label{eq:novel-codeword-boundary}
c(p_t-1) < \eta_t < cp_t.
\end{equation}
Otherwise a single phrase would contain the entire novel codeword.
Both bits have already appeared in row zero, so this phrase would be a
copy, contradicting the novelty of the codeword.

Since \(p_t\in[m+1,b]\), the boundary \(\eta_t\) lies in
\((cm,cb]\). Among all boundaries with bit-column in \((cm,cb]\)
occurring in rows \(s,\ldots,t\), choose the smallest bit-column
\(x\), and then the first row \(u\) containing that column. Let
\(y\) be its predecessor boundary, expressed relative to row \(u\).
Minimality of \(x\) gives \(y\le cm\). Initially, \(y\) might lie
before the start of row \(u\).

It remains to show that \(y>c(p_s-1)\). Suppose instead that
\(y\le c(p_s-1)\). The phrase ending at \(x\) contains the entire
codeword \(E(b_{p_s})\), since \(p_s\le m\) and \(x>cm\).
This phrase is a copy, and its source therefore contains an aligned
occurrence of \(E(b_{p_s})\). That symbol occurs only at coordinate
\(p_s\) in rows \(r\ge s\). Its occurrence fixes the translation
between source and target to \((r-u)L\), so the source endpoint is a
phrase boundary at bit-column \(x\) in row \(r\). Non-overlap implies
\(r<u\), because \(x>y\). Thus \(s\le r<u\), contradicting the
choice of \(u\) as the first row in \([s,t]\) containing boundary
column \(x\). Therefore \(y>c(p_s-1)\), proving
\eqref{eq:binary-charged-phrase}. In particular, the phrase lies in
the array part of row \(u\).
\end{proof}

We now use this lemma to count distinct phrases.

\begin{lemma}\label{lem:binary-lzend-lower-bound}
Every legal LZ-End parsing of \(\mathcal{S}_B\) has at least \(kh/2\)
phrases. In particular,
\begin{equation}\label{eq:binary-lzend-lower-bound}
z_{\mathrm{end}}(\mathcal{S}_B) \ge \frac{kh}{2}.
\end{equation}
\end{lemma}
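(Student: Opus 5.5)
We fix an arbitrary legal LZ-End parsing of \(\mathcal{S}_B\) and build an injective map from events to its phrases. An event is a triple \((v,s,t)\): \(v\) is an internal node of \(\mathcal{T}_k\) with children \([a,m]\) and \([m+1,b]\), \(s<t\) are consecutive times in the update subsequence restricted to \(v\), and \(p_s\in[a,m]\), \(p_t\in[m+1,b]\). To each event we assign the phrase given by Lemma~\ref{lem:binary-charging}. That phrase lies in the array part of some row \(u\in[s,t]\) and has boundary bit-columns \(y,x\) with \(c(p_s-1)<y\le cm<x\le cb\).

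The count of events is already done in Section~\ref{sec:bit-reversal}. Under the bit-reversal permutation \eqref{eq:permutation}, the updates restricted to any internal node alternate between its children, starting in the left child. So a node of size \(2^q\) contributes \(2^{q-1}\) events, and summing over all levels gives \(kh/2\) events. The whole proof therefore reduces to showing that the assignment is injective.

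\textbf{Injectivity.} Suppose one phrase is charged by both \((v,s,t)\) and \((v',s',t')\).

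First we show \(v=v'\). The phrase determines its row \(u\) and its bit-columns \(y<x\) within that row. Map bit-columns to coordinates by sending a column \(z\) to \(\lceil z/c\rceil\), the coordinate of the codeword containing bit \(z\).
\begin{itemize}
\item From \(c(p_s-1)<y\le cm\) we get \(\lceil y/c\rceil\in[p_s,m]\subseteq[a,m]\), the left child of \(v\).
\item From \(cm<x\le cb\) we get \(\lceil x/c\rceil\in[m+1,b]\), the right child of \(v\).
\end{itemize}
So \(v\) is a node whose two children separate the coordinates \(\lceil y/c\rceil\) and \(\lceil x/c\rceil\). In a binary tree over leaves there is exactly one such node: the lowest common ancestor of these two coordinates. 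The same holds for \(v'\), so \(v=v'\).

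Next we show \((s,t)=(s',t')\). For a fixed \(v\), the events at \(v\) are formed from consecutive elements of the update subsequence of \(v\). Because that subsequence alternates, the events are the disjoint pairs \((1\text{st},2\text{nd}),(3\text{rd},4\text{th}),\ldots\). Their closed row intervals \([s,t]\) are therefore pairwise disjoint. Since the phrase's row \(u\) lies in both \([s,t]\) and \([s',t']\), the two events coincide.

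\textbf{Main obstacle.} The delicate point is making "row \(u\)" and "bit-columns \(y,x\)" well defined for a phrase that may start in an earlier row. The proof of Lemma~\ref{lem:binary-charging} notes that \(y\) could a priori lie before row \(u\). Its conclusion, however, puts both boundaries inside the array part of row \(u\), so the phrase is contained in that row and the coordinates above are unambiguous. One should also check the edge case \(y=cm\) or \(y\) at a codeword boundary, where \(\lceil y/c\rceil=m\) still lies in the left child. With this in place, the injection yields at least \(kh/2\) distinct phrases, which proves \eqref{eq:binary-lzend-lower-bound}.
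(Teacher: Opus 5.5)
Your proposal is correct and follows the paper's proof essentially step for step. You charge each left-to-right event to the phrase from Lemma~\ref{lem:binary-charging}, use the coordinate keys $\lceil y/c\rceil$ and $\lceil x/c\rceil$ to identify $v$ as their lowest common ancestor, and use the disjointness of the row intervals $[s,t]$ at a fixed node to rule out double charging. Your extra remarks, deriving that disjointness from the alternation and noting that the phrase lies wholly in the array part of row $u$ so that $u,y,x$ are well defined, only make explicit what the paper leaves implicit.
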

\begin{proof}
Fix an arbitrary legal LZ-End parsing of \(\mathcal{S}_B\). For each
left-to-right event \((v,s,t)\), apply Lemma~\ref{lem:binary-charging}
and charge the event to a phrase with boundary bit-columns \(y,x\)
satisfying \eqref{eq:binary-charged-phrase} in a row \(u\in[s,t]\).

Assign each positive bit-column \(z\) the coordinate key
\(\lceil z/c\rceil\). The keys assigned to \(y\) and \(x\) lie in
opposite children of \(v\), so \(v\) is their unique lowest common
ancestor. A phrase can therefore be charged only by events at one
node. For a fixed node, the row intervals of successive left-to-right
events are disjoint, so each phrase is charged at most once.

For the bit-reversal permutation, the total number of left-to-right
events is
\[
\sum_{d=0}^{h-1} 2^d\frac{k}{2^{d+1}}=\frac{kh}{2}.
\]
Since these events are charged to distinct phrases, the parsing has at
least \(kh/2\) phrases. The parsing was arbitrary, so the bound holds
also for an optimal LZ-End parsing.
\end{proof}

\subsection{Proof of the Main Result}
\label{sec:main-proof}

We combine the two bounds from Section~\ref{sec:binary-charging} to prove
the binary separation stated in Theorem~\ref{thm:snapshot-separation}.

Let $k=2^h$ and let $\mathcal{S}_B=E(\mathcal{S})$ be the binary encoding of the snapshot
string constructed above. For the theorem, let $T_k=\mathcal{S}_B$. By construction,
\[
 |T_k|=c(k+1)^2,
 \qquad c=2\lceil\log_2(3k+1)\rceil+6.
\]
From Lemma~\ref{lem:lz77-binary-string} we have,
\[
 z_{\mathrm{no}}(T_k)\le 11k+c.
\]
The charging argument assigns all $kh/2$ left-to-right events to distinct
phrases in every legal LZ-End parsing. Consequently,
\[
 z_{\mathrm{end}}(T_k)\ge \frac{kh}{2}
 =\frac{k\log_2 k}{2}.
\]
Since $c=\Theta(\log k)$ and
$n=|T_k|=c(k+1)^2$, we have $\log n=\Theta(\log k)$. Therefore,
\[
 \frac{z_{\mathrm{end}}(T_k)}{z_{\mathrm{no}}(T_k)}
 \ge \frac{k\log_2 k}{2(11k+c)}
 =\Omega(\log k)
 =\Omega(\log n).
\]
The lower bound applies to every legal LZ-End parsing, so it holds in
particular when all phrase boundaries and source occurrences are chosen
optimally. This concludes the proof of Theorem~\ref{thm:snapshot-separation}.

\section{Phrases with an Appended Character}
\label{sec:appended-character}

We extend the separation to the convention in which each phrase has the
form $Xa$, where $a$ is one character and the copied prefix $X$ may be
empty. A nonempty $X$ must have an occurrence wholly before the phrase
starts. For LZ-End, this occurrence must end at an earlier phrase
boundary. We also allow the final phrase to consist solely of a copy;
the argument applies equally when the final phrase must have an appended
character. Let $z_{\mathrm{end}}^{+}(S)$ be the minimum number of phrases
in such an LZ-End parsing, and let $z_{\mathrm{no}}^{+}(S)$ be the number
of phrases in the greedy non-overlapping LZ77 parsing under this
convention. As for whole-copy LZ77, the greedy parsing minimizes the
number of phrases.

\begin{lemma}\label{lem:appended-character-conversion}
Every copy-plus-character LZ-End parsing with $z$ phrases can be
converted into a legal whole-copy LZ-End parsing with at most $2z$
phrases. Consequently, for every string $S$,
\[
z_{\mathrm{end}}(S)\le 2z_{\mathrm{end}}^{+}(S).
\]
\end{lemma}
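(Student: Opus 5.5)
The conversion splits every copy-plus-character phrase $Xa$ with nonempty $X$ into the two phrases $X$ and $a$. A phrase with empty $X$ is just the single character $a$, and it stays as it is. Each original phrase becomes at most two new phrases, so the new parsing has at most $2z$ phrases. It remains to check that every new phrase is legal under the whole-copy LZ-End rules.

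The key fact is that the new boundaries contain the old ones. Every boundary of the original parsing is still a boundary of the new parsing, since we only insert the extra cut between $X$ and $a$.

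We then check the three kinds of new phrases.

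\begin{itemize}
\item \emph{Copied parts.} The phrase $X$ has, by assumption, an occurrence wholly before the start of $Xa$, ending at an earlier boundary of the old parsing. That boundary is still a boundary, and it is at most the start of $X$. This is exactly condition~\eqref{eq:lzend-source}, with the case $j=i$ permitted when the source ends at the current phrase start.
\item \emph{Appended characters.} The phrase $a$ has length one. If this is the first occurrence of $a$ in $S$, it is a literal. Otherwise we must exhibit an earlier occurrence of $a$ that ends at a new boundary.
\item \emph{Final phrase.} If the final phrase is a pure copy $X$, it is handled like any copied part.
\end{itemize}

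The main obstacle is the non-first appended characters. I would use the observation made in the LZ-End definition: the first occurrence of any character cannot lie inside a copied part $X$, since $X$ copies an earlier substring. So that first occurrence must be an appended character in the original parsing, and after splitting it is a singleton phrase ending at a boundary. Any later singleton $a$ can therefore copy from it, and that source lies entirely before the target.

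This gives a legal whole-copy LZ-End parsing with at most $2z$ phrases. Applying it to an optimal copy-plus-character parsing yields $z_{\mathrm{end}}(S)\le 2z_{\mathrm{end}}^{+}(S)$.
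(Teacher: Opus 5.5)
Your proposal is correct and follows the paper's proof essentially step for step. You split each phrase $Xa$ into $X$ and $a$, observe that the old boundaries survive so every copied prefix (and a pure-copy final phrase) keeps a legal source, and note that each character's first occurrence must be an appended character, whose resulting singleton then serves as the source for every later singleton occurrence of that character.
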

\begin{proof}
We will convert the copy-plus-character parsing into a whole-copy parsing as follows.
Split each phrase $Xa$ into its nonempty copied prefix $X$, if any,
and the singleton $a$. Every original phrase boundary remains a boundary
in the resulting parsing. Hence the source of each copied prefix $X$
still ends at a phrase boundary and is wholly before its target.

The first occurrence of any character cannot lie inside a copied prefix,
since its source would contain an earlier occurrence. It must therefore
be an appended character and becomes a singleton phrase after the
split. This singleton is a legal literal. Every later appended occurrence
of the same character can be represented by a one-character copy of
that first singleton, whose endpoint is an earlier phrase boundary.
Thus all resulting singleton phrases are legal as well.
If the final phrase is a pure copy, retain it unchanged.
Each original phrase contributes at most
two whole-copy phrases, proving the claim. Applying the conversion to an optimal
copy-plus-character parsing gives the stated inequality.
\end{proof}

\begin{corollary}\label{cor:appended-character-separation}
For the binary strings $T_k$ of Theorem~\ref{thm:snapshot-separation},
\[
z_{\mathrm{no}}^{+}(T_k)\le c+11k,
\qquad
z_{\mathrm{end}}^{+}(T_k)\ge \frac{k\log_2 k}{4}.
\]
Consequently, for $n=|T_k|$,
\[
\frac{z_{\mathrm{end}}^{+}(T_k)}{z_{\mathrm{no}}^{+}(T_k)}
=\Omega(\log n),
\]
even over a binary alphabet.
\end{corollary}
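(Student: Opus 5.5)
The corollary has two bounds, and I would prove them separately.

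\textbf{Lower bound.} This follows directly from Lemma~\ref{lem:appended-character-conversion} together with Theorem~\ref{thm:snapshot-separation}. The lemma gives $z_{\mathrm{end}}(T_k)\le 2z_{\mathrm{end}}^{+}(T_k)$, and the theorem gives $z_{\mathrm{end}}(T_k)\ge k\log_2 k/2$. Combining the two yields $z_{\mathrm{end}}^{+}(T_k)\ge k\log_2 k/4$.

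\textbf{Upper bound.} Here I would show that $T_k$ has a legal copy-plus-character non-overlapping parsing with at most $c+11k$ phrases. Since the greedy parsing under this convention is optimal, this gives $z_{\mathrm{no}}^{+}(T_k)\le c+11k$. The plan is to reuse the whole-copy parsing from the proof of Lemma~\ref{lem:lz77-binary-string}, merging phrases instead of splitting them.
\begin{itemize}
\item The first codeword $E(0)$ is written as $c$ phrases, each with an empty copy and one appended bit.
\item Each new codeword $E(j)$ is built from (i) the common prefix with $E(j-1)$, (ii) the changed bit, and (iii) the trailing zeros. Parts (i) and (ii) merge into one phrase of the form $Xa$. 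Part (iii) becomes a copy whose final zero is the appended character; its shortened source is still a prefix of an earlier occurrence, so it remains non-overlapping. This gives at most $3$ phrases per new codeword, as before.
\item Each later row is parsed as: the prefix copy, then the new codeword with its final bit appended, then the suffix copy with the separator's final bit appended.
\end{itemize}

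The general principle behind all of this is simple. Any whole-copy phrase $X$ of length at least two can be rewritten as $X'a$, where $X'$ is $X$ minus its last character. The source of $X'$ is a prefix of the source of $X$, so it is still wholly before the target. Hence rewriting never increases the phrase count.

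This principle can replace the case analysis entirely. A one-character copied phrase becomes the form $\varepsilon a$, which is allowed because the copied part $X$ may be empty. Applying the rewrite to the whole-copy parsing of Lemma~\ref{lem:lz77-binary-string} phrase by phrase gives $z_{\mathrm{no}}^{+}(T_k)\le z_{\mathrm{no}}(T_k)\le c+11k$ immediately.

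\textbf{Ratio.} With both bounds in hand, $\log n=\Theta(\log k)$ gives
\[
\frac{z_{\mathrm{end}}^{+}(T_k)}{z_{\mathrm{no}}^{+}(T_k)}\ \ge\ \frac{k\log_2 k}{4(c+11k)}\ =\ \Omega(\log n).
\]
The only step needing care is this conversion for $z_{\mathrm{no}}^{+}$. It reduces to the observation that trimming the last character of a copy preserves non-overlap, which is straightforward.
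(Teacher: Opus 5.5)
Your proposal is correct and matches the paper's proof. The lower bound comes from Lemma~\ref{lem:appended-character-conversion} combined with Theorem~\ref{thm:snapshot-separation}. The upper bound trims the last character of each copied phrase of the whole-copy parsing, writes one-character phrases as $\varepsilon a$, and invokes greedy optimality.

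Your preliminary case analysis is unnecessary and its row description is slightly off, since each separator is a whole new codeword rather than a single appended bit. Your final phrase-by-phrase rewrite does not depend on it.
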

\begin{proof}
Lemma~\ref{lem:appended-character-conversion} and
Theorem~\ref{thm:snapshot-separation} give
\[
z_{\mathrm{end}}^{+}(T_k)
\ge \frac12 z_{\mathrm{end}}(T_k)
\ge \frac{k\log_2 k}{4}.
\]

For the LZ77 upper bound, take the whole-copy non-overlapping parsing
from Section~\ref{sec:binary-charging}. For each copied phrase, copy
all but its last character from the corresponding prefix of its source,
and append its last character. The shortened source is still wholly
before the target. A one-character phrase is represented by an empty
copy and its appended character. This preserves the phrase boundaries
and the number of phrases, and is a legal non-greedy LZ77 parsing.  
Therefore
$z_{\mathrm{no}}^{+}(T_k)\le c+11k$ by the optimality of greedy LZ77.

Since $c=\Theta(\log k)$ and $n=c(k+1)^2$, we obtain
\[
\frac{z_{\mathrm{end}}^{+}(T_k)}{z_{\mathrm{no}}^{+}(T_k)}
\ge \frac{k\log_2 k}{4(c+11k)}
=\Omega(\log k)
=\Omega(\log n).
\]
\end{proof}

\bibliographystyle{plainurl}
\bibliography{references}

\begin{thebibliography}{10}

\bibitem{Akagi2023}
Tooru Akagi, Mitsuru Funakoshi, and Shunsuke Inenaga.
\newblock Sensitivity of string compressors and repetitiveness measures.
\newblock {\em Information and Computation}, 291:104999, 2023.
\newblock \href {https://doi.org/10.1016/j.ic.2022.104999}
  {\path{doi:10.1016/j.ic.2022.104999}}.

\bibitem{Bannai2023}
Hideo Bannai, Mitsuru Funakoshi, Kazuhiro Kurita, Yuto Nakashima, Kazuhisa
  Seto, and Takeaki Uno.
\newblock Optimal {LZ-End} parsing is hard.
\newblock In {\em 34th Annual Symposium on Combinatorial Pattern Matching (CPM
  2023)}, volume 259 of {\em Leibniz International Proceedings in Informatics
  (LIPIcs)}, pages 3:1--3:11. Schloss Dagstuhl -- Leibniz-Zentrum f{\"u}r
  Informatik, 2023.
\newblock \href {https://doi.org/10.4230/LIPIcs.CPM.2023.3}
  {\path{doi:10.4230/LIPIcs.CPM.2023.3}}.

\bibitem{Bille2018}
Philip Bille, Travis Gagie, Inge~Li G{\o}rtz, and Nicola Prezza.
\newblock A separation between {RLSLPs} and {LZ77}.
\newblock {\em Journal of Discrete Algorithms}, 50:36--39, 2018.
\newblock \href {https://doi.org/10.1016/j.jda.2018.09.002}
  {\path{doi:10.1016/j.jda.2018.09.002}}.

\bibitem{BonehGawrychowski2026}
Itai Boneh and Pawe{\l} Gawrychowski.
\newblock Random access to {LZ-End}: Faster and deterministic, 2026.
\newblock URL: \url{https://arxiv.org/abs/2607.14923}, \href
  {https://arxiv.org/abs/2607.14923} {\path{arXiv:2607.14923}}, \href
  {https://doi.org/10.48550/arXiv.2607.14923}
  {\path{doi:10.48550/arXiv.2607.14923}}.

\bibitem{Charikar2005}
Moses Charikar, Eric Lehman, Ding Liu, Rina Panigrahy, Manoj Prabhakaran, Amit
  Sahai, and abhi shelat.
\newblock The smallest grammar problem.
\newblock {\em IEEE Transactions on Information Theory}, 51(7):2554--2576,
  2005.
\newblock \href {https://doi.org/10.1109/TIT.2005.850116}
  {\path{doi:10.1109/TIT.2005.850116}}.

\bibitem{FS2026}
Yuto Fujie and Hiroki Shibata.
\newblock Sensitivity and size relationships of the lempel–ziv factorization,
  2026.
\newblock URL: \url{https://doi.org/10.48550/arXiv.2608.03351}, \href
  {https://arxiv.org/abs/2608.03351} {\path{arXiv:2608.03351}}, \href
  {https://doi.org/0.48550/arXiv.2608.03351}
  {\path{doi:0.48550/arXiv.2608.03351}}.

\bibitem{Fujie2025}
Yuto Fujie, Hiroki Shibata, Yuto Nakashima, and Shunsuke Inenaga.
\newblock Tight additive sensitivity on {LZ}-style compressors and string
  attractors, 2025.
\newblock URL: \url{https://arxiv.org/abs/2506.22778}, \href
  {https://arxiv.org/abs/2506.22778} {\path{arXiv:2506.22778}}, \href
  {https://doi.org/10.48550/arXiv.2506.22778}
  {\path{doi:10.48550/arXiv.2506.22778}}.

\bibitem{GawrychowskiKoscheManea2023}
Pawe{\l} Gawrychowski, Maria Kosche, and Florin Manea.
\newblock On the number of factors in the {LZ-End} factorization.
\newblock In {\em String Processing and Information Retrieval (SPIRE 2023)},
  volume 14240 of {\em Lecture Notes in Computer Science}, pages 253--259.
  Springer, 2023.
\newblock \href {https://doi.org/10.1007/978-3-031-43980-3_20}
  {\path{doi:10.1007/978-3-031-43980-3_20}}.

\bibitem{HF1998}
Haripriyan Hampapuram and Michael~L Fredman.
\newblock Optimal biweighted binary trees and the complexity of maintaining
  partial sums.
\newblock {\em SIAM J. Comput.}, 28(1):1--9, 1998.

\bibitem{HuckeLohreyReh2016}
Danny Hucke, Markus Lohrey, and Carl~Philipp Reh.
\newblock The smallest grammar problem revisited.
\newblock In {\em String Processing and Information Retrieval (SPIRE 2016)},
  volume 9954 of {\em Lecture Notes in Computer Science}, pages 35--49.
  Springer, 2016.
\newblock \href {https://doi.org/10.1007/978-3-319-46049-9_4}
  {\path{doi:10.1007/978-3-319-46049-9_4}}.

\bibitem{Ideue2021}
Takumi Ideue, Takuya Mieno, Mitsuru Funakoshi, Yuto Nakashima, Shunsuke
  Inenaga, and Masayuki Takeda.
\newblock On the approximation ratio of {LZ-End} to {LZ77}.
\newblock In {\em String Processing and Information Retrieval (SPIRE 2021)},
  volume 12944 of {\em Lecture Notes in Computer Science}, pages 114--126.
  Springer, 2021.
\newblock \href {https://doi.org/10.1007/978-3-030-86692-1_10}
  {\path{doi:10.1007/978-3-030-86692-1_10}}.

\bibitem{Jez2016}
Artur Je{\.z}.
\newblock A really simple approximation of smallest grammar.
\newblock {\em Theoretical Computer Science}, 616:141--150, 2016.
\newblock \href {https://doi.org/10.1016/j.tcs.2015.12.032}
  {\path{doi:10.1016/j.tcs.2015.12.032}}.

\bibitem{KempaSaha2022}
Dominik Kempa and Barna Saha.
\newblock An upper bound and linear-space queries on the {LZ-End} parsing.
\newblock In {\em Proceedings of the 2022 Annual ACM-SIAM Symposium on Discrete
  Algorithms (SODA)}, pages 2847--2866. SIAM, 2022.
\newblock \href {https://doi.org/10.1137/1.9781611977073.111}
  {\path{doi:10.1137/1.9781611977073.111}}.

\bibitem{KosolobovValenzuela2021}
Dmitry Kosolobov and Daniel Valenzuela.
\newblock {Lempel--Ziv} parsing for sequences of blocks.
\newblock {\em Algorithms}, 14(12):359, 2021.
\newblock \href {https://doi.org/10.3390/a14120359}
  {\path{doi:10.3390/a14120359}}.

\bibitem{KN10}
Sebasti{\'a}n Kreft and Gonzalo Navarro.
\newblock {LZ77}-like compression with fast random access.
\newblock In {\em 2010 Data Compression Conference (DCC)}, pages 239--248.
  IEEE, 2010.
\newblock \href {https://doi.org/10.1109/DCC.2010.29}
  {\path{doi:10.1109/DCC.2010.29}}.

\bibitem{KN13}
Sebasti{\'a}n Kreft and Gonzalo Navarro.
\newblock On compressing and indexing repetitive sequences.
\newblock {\em Theoretical Computer Science}, 483:115--133, 2013.
\newblock \href {https://doi.org/10.1016/j.tcs.2012.02.006}
  {\path{doi:10.1016/j.tcs.2012.02.006}}.

\bibitem{PD2006}
Mihai P\v{a}tra\c{s}cu and Erik~D. Demaine.
\newblock Logarithmic lower bounds in the cell-probe model.
\newblock {\em SIAM J. Comput.}, 35(4):932--963, 2006.
\newblock Announced at SODA 2004.

\bibitem{Rytter2003}
Wojciech Rytter.
\newblock Application of {Lempel--Ziv} factorization to the approximation of
  grammar-based compression.
\newblock {\em Theoretical Computer Science}, 302(1--3):211--222, 2003.
\newblock \href {https://doi.org/10.1016/S0304-3975(02)00777-6}
  {\path{doi:10.1016/S0304-3975(02)00777-6}}.

\bibitem{Wilber1989}
Robert Wilber.
\newblock Lower bounds for accessing binary search trees with rotations.
\newblock {\em SIAM J. Comput.}, 18(1):56--67, 1989.

\bibitem{lz77}
Jacob Ziv and Abraham Lempel.
\newblock A universal algorithm for sequential data compression.
\newblock {\em IEEE Transactions on Information Theory}, 23(3):337--343, 1977.
\newblock \href {https://doi.org/10.1109/TIT.1977.1055714}
  {\path{doi:10.1109/TIT.1977.1055714}}.

\end{thebibliography}

\end{document}